\documentclass[superscriptaddress,nofootinbib,preprintnumbers,showpacs,tightenlines,notitlepage]{revtex4}
\usepackage[latin1]{inputenc}
\usepackage{graphicx}
\usepackage{amssymb}
\usepackage{color}
\usepackage{float}
\usepackage{amsmath}
\usepackage{mathrsfs}
\usepackage{mathtools}
\usepackage{amsfonts}
\usepackage{dcolumn}
\usepackage{hyperref}
\usepackage{amsthm}
\usepackage{color}
\usepackage{upgreek}

\def\3nab{\tilde{\nabla}}

\def\be {\begin{equation}}
\def\ee {\end{equation}}
\def\ba {\begin{eqnarray}}
\def\ea {\end{eqnarray}}
\newtheorem{thm}{Theorem}

\newtheorem*{Def}{Definition}

\newcommand{\barray}{\begin{array}}
\newcommand{\earray}{\end{array}}

\newcommand{\iu}{{i\mkern1mu}}
\hypersetup{
	colorlinks=true,
	citecolor=red,
    linkcolor=blue,
    urlcolor=blue,
	}
\begin{document}

\title{Temperature of free gravitational field: A geometrical perspective}

\author{Samarjit Chakraborty}
\email{samarjitxxx@gmail.com}
\affiliation{Astrophysics Research Centre, Discipline of Mathematics, University of KwaZulu-Natal, Private Bag X54001, Durban 4000, South Africa}

\author{Gareth Amery}
\email{Ameryg1@ukzn.ac.za}
\affiliation{Astrophysics Research Centre, Discipline of Mathematics, University of KwaZulu-Natal, Private Bag X54001, Durban 4000, South Africa}

\author{Sunil D Maharaj}
\email{Maharaj@ukzn.ac.za}
\affiliation{Astrophysics Research Centre, Discipline of Mathematics, University of KwaZulu-Natal, Private Bag X54001, Durban 4000, South Africa}
 
\author{Rituparno Goswami}
\email{goswami@ukzn.ac.za}
\affiliation{Astrophysics Research Centre, Discipline of Mathematics, University of KwaZulu-Natal, Private Bag X54001, Durban 4000, South Africa}

 \begin{abstract}
 In this paper, using a novel geometrical approach, we relate the concept of the thermodynamic temperature of the free gravitational field, to the non-affinity of real null geodesics in a Newman Penrose tetrad. This naturally links various temperature functions like Clifton, Ellis and Tavakol temperature, Hawking temperature, Unruh temperature etc., in their respective proper limits. Although our analysis is done within the realm of local rotational symmetry, we show that the result can be extended to other Petrov type D geometries, like the Kerr spacetime. We also obtain the geometrical and causal transport equations for this temperature function, in the form of a hyperbolic wave equation with a forcing term, sourced by Weyl curvature and matter. Finally, as a possible physical interpretation of the non-affinity, we relate the geometrical temperature with the gravitational red/blue shift of light rays travelling along null geodesics.

 \end{abstract}

\pacs{04.20.Cv, 04.20.Dw}
\maketitle

\section{Introduction} 
 
The second law of thermodynamics encounters serious problems in scenarios where gravitational force dominates (for example in continual gravitational collapse in cosmology). In order to counter inconsistencies, Penrose proposed that the gravitational field itself should carry entropy, termed \textit{gravitational entropy} (GE) \cite{penrose1}. He suggested that this GE should be determined by the Weyl curvature tensor, and this is also known as the Weyl curvature hypothesis (WCH) \cite{penrose2}. The reason for putting forward this hypothesis is as follows. The Riemann tensor contains all the information about the spacetime curvature. The trace of this tensor (the Ricci tensor) is directly related to matter via the Einstein field equations and vanishes identically in the absence of matter. Therefore any thermodynamic description for free gravity (in the absence of matter), must be given by the tracefree part of Riemann tensor, which is also known as the Weyl curvature tensor. However, Penrose did not propose any specific functional form for this GE. \\

Significant development on this issue was reported in the work by Clifton, Ellis and Tavakol (CET) \cite{CET}, where they proposed an energy momentum tensor for the free gravitational field to be the 2-index square root of the four indexed
Bel-Robinson tensor which is constructed from the Weyl tensor and its dual ($T_{abcd}\equiv\frac14\left(C_{eabf}C^{e~~~f}_{~cd}+C^*_{eabf}C^{*e~~~f}_{~cd}\right)$). This tensor is a unique Maxwellian tensor with dimension  $L^{-4}$, so as to act as the `super energy momentum' tensor for gravitational fields \cite{E7,sen1}. Using this energy momentum tensor for free gravitational field and an ad hoc temperature function, the authors constructed the required gravitational entropy for free gravity and correctly obtained the Hawking-Bekenstein (HB) entropy for a static Schwarzschild black hole (BH) \cite{Bekenstein,Hawking} and also showed that their measure of GE remains non-negative, measures local anisotropy of the gravitational field, increases monotonically with the structure formation in the universe, and only vanishes in conformally flat spacetimes. This makes the proposition very robust as it is not limited to horizons and is applicable to any point in spacetime. \\

However, the local temperature of the free gravitational field, that the authors proposed as an inspired function from the quantum field theory in curved spacetimes and BH thermodynamics, was treated like a dimensionally consistent (with surface gravity) phenomenological integrating factor. The authors themselves argued that their proposed local gravitational temperature is an extra ingredient in their construction of the entropy of the free gravitational field. Similar construction was used by
\cite{AEGH,SG} to reproduce Hawking entropy of a black hole from continually collapsing stellar configurations. Another important insight came in \cite{bolejko}, where the proposal got validated with the help of cosmological simulation, and Bolejko demonstrated that the CET proposal is compatible with the cosmic no-hair conjecture. Meanwhile, Sussman and other collaborators conducted a series of studies on the CET proposal and provided theoretical clarity \cite{CET1,CET2,CET3,CET4,CET5,CET6,arrow,CET7}.\\

From the above examples, it is clear that the proposal for gravitational temperature in \cite{CET} has shown remarkable consistency through different systems. However, we do not have a deeper physical understanding for this proposed temperature function. Therefore our present investigation tries to answer the following question:\\

{\bf Question}: {\em Since the general relativity is purely a geometrical theory, how can we geometrically interpret the concept of the temperature of a gravitational field?}\\

To answer the above question, let us first list the basic prerequisites that any suitable candidate for the temperature function of a free gravitational field must have.
\begin{enumerate}
\item The temperature function must be a kinematic scalar, locally defined via the geometry of the spacetime on any open set of the spacetime manifold.
\item The temperature function should vanish in the absence of any intrinsic curvature (that is in Minkowski spacetime).
\item The function must reproduce all the known temperature functions (e.g. Clifton, Ellis and Tavakol temperature, Hawking temperature, Unruh temperature and so on), when applied to the respective proper limits.
\item And finally, the function must obey a causal and geometrical transport equation, to have a firm thermodynamic footing.
\end{enumerate}

In order to find a suitable candidate, that satisfies all these pre-requisites, it was observed that the CET temperature has a unique correspondence with the non-affinity of the outgoing radial null geodesics in spherically symmetric spacetimes. In other words, the very fact that the spacetime has a non-zero gravitational temperature, would make it impossible for the outgoing radial null geodesic to be affinely parametrised. Motivated by this observation, we proved an important theorem for a locally rotationally symmetric (Class II) spacetimes (of which spherical symmetry is a subclass). We showed that both the real null vectors of the Newman Penrose tetrad (which are tangent to geodesics for the given symmetry) for these spacetimes {\em cannot be simultaneously affinely parametrised}, unless the spacetime has zero intrinsic curvature (Minskowski spacetime). Without any loss of generality one can then assume that the outgoing null geodesics are not affinely parametrised and this non-affinity factor to be the temperature function. By construction, this function then obeys the first two prerequisites. In this paper, we prove that the function also obeys the other two prerequisites and hence is the ideal geometrical candidate for the gravitational temperature. \\

Once we established our results for locally rotationally symmetric (Class II) spacetimes, we could naturally extrapolate our result to more general Petrov type D spacetimes (the Kerr geometry), where one can always find a Newman Penrose tetrad with one of the real null vectors tangent to a null-geodesic. We showed that in that case as well, the non-affinity factors reproduce the temperature function on the horizon. \\

 The paper is organised as follows:. In the next section we discuss the semitetrad 1+1+2 formalism, which is best suited for LRS-II class of spacetimes, and also the Newman Penrose formalism for this class. We give a dictionary for the geometrical variables associated with these two formalisms. In section 3, we discuss the problem in the context of thermodynamics of free gravity, and in section 4 we prove the key result related to the non-affinity of real null vectors in the Newman Penrose tetrad, in presence of non-trivial intrinsic curvature. In section 5, we relate the measure of this non-affinity with the local gravitational function and show that in proper limits, this function gives all the known temperature functions. In section 6, we derive the causal transport equation for the gravitational temperature. In the next section, we showed that our final result carries on to other algebraically special spacetimes, like Kerr spacetime, beyond local rotational symmetry. And finally we give a physical interpretation of this no-affinity in terms of gravitational red/blue shift of electromagnetic waves. 

\section{Covariant description of LRS-II perfect fluid spacetimes}

The LRS II class of spacetimes (including the spherically symmetric spacetimes) are a subclass of locally rotationally symmetric (LRS) class of spacetimes, with a continuous non-trivial isotropy group of spatial rotations at every point. LRS-II spacetimes are convenient to study as they are irrotational and have a covariantly defined preferred spatial direction at each point.

\subsection{LRS-II perfect fluid spacetime in 1+1+2 semi-tetrad decomposition}

 To start with, we employ the semi-tetrad 1+1+2 covariant decomposition  for our study. For a detailed explanation please refer to \cite{H8,H9,H10,H11,H12,H13,H14}. In this scheme,
we choose a timelike unit vector $u^a$, usually defined along the fluid flow lines and the orthogonal spacelike unit vector along the preferred spatial direction $e^a$ (radial direction in spherical symmetry), and decompose the spacetime as \cite{R18}
\begin{equation}\label{decomp}
g_{ab}=-u_au_b+e_ae_b+N_{ab},
\end{equation}
where $N_{ab}$ is the projection tensor on the spherical 2-shells and is defined as $ N_{ab}=h_{ab}-e_{a}e_{b} $. Here $ h_{ab} $ is the projection tensor (in 1+3 covariant decomposition) orthogonal to $ u^{a} $ and spans the $3$-space.

Therefore, the $1+1+2$ description $\{u^{a}, e^{a}, N_{ab}\}$ gives rise to two important kind of derivatives (temporal and spatial) along $u_{a}$ and $e_{a}$: 
\begin{itemize}
\item \textbf{The dot derivative}: This is the covariant time derivative along the observers' worldlines . Therefore, for any tensor  $ S^{a...b}{}_{c...d}$, it is defined as 
$\dot{S}^{a...b}{}_{c...d}\, \equiv u^{e} \nabla_{e} {S}^{a...b}{}_{c...d} $.
\item \textbf{The hat derivative}: This is the spatial derivative along the spacelike unit vector $e^a$ defined as $\hat{\psi}_{a...b}{}^{c...d} \equiv e^{f}D_{f}\psi_{a...b}{}^{c...d}$, where 
we define
$ D_{e}S^{a...b}{}_{c...d}{} \equiv h^a{}_f
h^p{}_c...h^b{}_g h^q{}_d h^r{}_e \nabla_{r} {S}^{f...g}{}_{p...q}$.
\end{itemize}

This semi-tetrad covariant decomposition helps to obtain a set of geometrical quantities for the chosen timelike and spacelike vectors $ u^{a} $ and $ e^{a} $. These are the expansion scalar $\Theta$, acceleration $\mathcal{A}=\dot{u}^ae_a$ and the shear scalar $\Sigma=\sigma_{ab}e^ae^b$. The electric part of the Weyl tensor (responsible for tidal forces and inhomogeneity), can be similarly extracted as $\mathcal{E}=E_{ab}e^ae^b$ with $E_{ab}= C_{acbd}u^cu^d$, whereas, the magnetic part of the Weyl tensor (due to  rotation or time varying  spacetime) $H_{ab}=C^*_{acbd}u^cu^d$, vanishes in LRS-II spacetimes \cite{E7}, with $C^*_{abcd}$ as the right dual of the Weyl tensor. Moreover, these chosen unit vectors decompose the energy momentum tensor of the perfect fluid matter field (EMT) 
\begin{equation}\label{EMT}
T_{ab}=\rho u_a u_b + p \left(e_a e_b + N_{ab}\right),
\end{equation}
into the energy density $\rho$, isotropic pressure $p$, where the heat flux $ Q=0 $ and anisotropic stress $\Pi=0$. Due to spherical symmetry, the only non-vanishing geometrical quantity related to the preferred spacelike congruence $ e^{a} $ is the sheet expansion $\phi$. 
Therefore, the set of scalars that fully describe the LRS-II class of spacetimes are
\begin{equation}\label{set1}
\mathcal{D}\equiv\left\{\Theta, \mathcal{A}, \Sigma, \phi, \mathcal{E}, \rho, p \right\}.
\end{equation}
These geometrical and thermodynamical scalars (from EMT) along with their directional derivatives along $u^a$ (denoted by dot) and $e^a$ (denoted by hat) completely describe the Ricci and the Bianchi identities and therefore, can completely describe the dynamics of the spacetime. We can also geometrically define the effective `gravitational mass' as the Misner-Sharp mass $ \mathcal{M}$, enclosed within the spherical $2$-shell at a given instant of time. In terms of the aforementioned covariant scalars it is expressed as \cite{E8,E9}
\begin{equation}\label{Mass}
\mathcal{M} = \frac{1}{2K^{\frac{3}{2}}}\left(\frac{1}{3}\rho - \mathcal{E} \right).
\end{equation}
Therefore, in vacuum, the effective `gravitational mass' is entirely determined by the electric part of Weyl scalar $\mathcal{E}$. Here $K$ is the Gaussian curvature of the spherical 2-shells (which is related to the area radius of the shells $\mathcal{R}$) as the inverse square of $\mathcal{R}$. The Gaussian curvature $ K $ is related to the covariant scalars of Eq.(\ref{set1}) as  
\begin{equation}\label{gauss}
K \equiv\frac{1}{\mathcal{R}^2}=\frac{1}{3}\rho-\mathcal{E} +\frac{1}{4}\phi^2
-\left(\frac{1}{3}\Theta-\frac{1}{2}\Sigma\right)^2\, .
\end{equation}
 The field equations in terms of these scalar variables are given as\\
\textit{Propagation}:
\begin{align}
    \hat{\phi} &= -\frac{1}{2}\phi^2 + \left(\frac{1}{3}\Theta+\Sigma \right) \left(\frac{2}{3}\Theta-\Sigma \right)  -\frac{2}{3}\rho - \mathcal{E}, \label{hatphinl} \\  
    \hat{\Sigma} - \frac{2}{3}\hat{\Theta} &= -\frac{3}{2}\phi\Sigma, \label{Sigthetahat} \\  
    \hat{\mathcal{E}} - \frac{1}{3}\hat\rho &= -\frac{3}{2}\phi\mathcal{E}. \label{ehat}
\end{align}

\textit{Evolution}:
\begin{align}
    \dot\phi &= -\left(\Sigma-\frac{2}{3}\Theta \right) \left(\mathcal{A}-\frac{1}{2}\phi \right), \label{phidot} \\   
    \dot{\Sigma}-\frac{2}{3}\dot{\Theta} &= -\mathcal{A}\phi + 2\left(\frac{1}{3}\Theta-\frac{1}{2}\Sigma \right)^2  + \frac{1}{3}(\rho+3p) - \mathcal{E}, \label{Sigthetadot} \\  
    \dot{\mathcal{E}} -\frac{1}{3}\dot{\rho} &= \left(\frac{3}{2}\Sigma-\Theta \right)\mathcal{E} -\frac{1}{2}(\rho+p)\left(\Sigma-\frac{2}{3}\Theta \right). \label{edot}
\end{align}

\textit{Propagation/evolution}:
\begin{align}
    \hat{\mathcal{A}}-\dot{\Theta} &= -(\mathcal{A}+\phi)\mathcal{A} + \frac{1}{3}\Theta^2 + \frac{3}{2}\Sigma^2 + \frac{1}{2}(\rho+3p), \label{Raychaudhuri} \\
    \dot\rho &= -\Theta(\rho+p), \label{Qhat} \\    
    \hat p &= -(\rho+p)\mathcal{A}. \label{Qdot}
\end{align}

The full covariant derivative of the chosen unit timelike vector $u^a$ and the spacelike unit vector along the preferred direction $e^a$ are given by the following

\begin{align}\label{del_eu}
\nabla_au_b &=-\mathcal{A} u_ae_b+e_a e_b\left(\dfrac{1}{3}\Theta+\Sigma\right)+N_{ab}\left(\dfrac{1}{3}\Theta-\dfrac{1}{2}\Sigma\right), \nonumber\\
\nabla_a e_b &=-\mathcal{A} u_au_b +\left(\Sigma+\dfrac{1}{3}\Theta\right)e_a u_b+\dfrac{1}{2}\phi N_{ab}.
\end{align}

\subsection{LRS-II perfect fluid spacetime in Newman-Penrose tetrad formalism}
To get a transparent geometrical picture for this class of spacetimes we also use the Newman-Penrose (NP) formalism \cite{chandra,fronov}. We assume that the LRS-II spacetime is spanned by the NP null tetrad $(l^a,k^a,m^a,\bar{m}^a)$. Here  $l_a$ and $k_a$ are real null vectors, and $m_a$, $\bar{m}_a$ are the complex null vectors at a given point in the spacetime with the following properties: $l_al^a=k_ak^a=m_am^a=\bar{m}_a\bar{m}^a=0$, $l_ak^a=-1,m_a\bar{m}^a=1$ and $l_am^a=k_am^a=l_a\bar{m}^a=k_a\bar{m}^a=0$. The metric in terms of these tetrads is given by \cite{null}
\begin{equation}
g_{ab}=-2l_{(a}k_{b)}+2m_{(a}\bar{m}_{b)}.
\end{equation}

In terms of the 1+1+2 semitetrad decomposition of LRS-II spacetimes, the Newman Penrose null tetrads are  
\begin{equation}
k^{a}=\frac{A}{\sqrt{2}}(u^{a}+e^{a}),\,\,\,\, l^{a}=\frac{1}{\sqrt{2}A}(u^{a}-e^{a}),\,\,\, N_{ab}=2m_{(a}\bar{m}_{b)}.
\end{equation}
Here $A$ is a real function on the manifold, consistent with the orthonormal conditions and is associated with the Lorentz transformations (rotations of class III) that leaves the directions of the real null vectors unchanged \cite{chandra}. Therefore, we set $A=1$ without loss of generality. Our geometrical setup is done in such a manner, that the null vectors are properly aligned with the Weyl principal null directions, namely along the ingoing and outgoing radial directions in LRS-II symmetry. We can define the directional derivatives along each of the tetrad null vector fields in the following way:
\begin{equation}
D\equiv l^a\nabla_a,\; \Delta\equiv k^a\nabla_a,\; \delta\equiv m^a\nabla_a,\; \overline{\delta}\equiv \overline{m}^a\nabla_a\,.
\end{equation}
The real directional derivatives of the real null vectors are
\begin{align}
D l^{a} & =(\varepsilon + \bar{\varepsilon})l^{a}-\bar{\kappa}m^{a}-\kappa \bar{m}^{a}, \label{dirder1}\\
\Delta k^{a} & = \nu m^{a}+\bar{\nu} \bar{m}^{a}-(\gamma + \bar{\gamma})k^{a}.\label{dirder2}\\
\Delta l^{a} & =(\gamma + \bar{\gamma}) l^{a}-\bar{\tau}m^{a}-\tau\bar{m}^{a}, \label{dirder3}\\
D k^{a} & = \uppi m^{a} +\bar{\uppi}\bar{m}^{a}-(\varepsilon + \bar{\varepsilon})k^{a}.\label{dirder4}
\end{align}
The commutator relation between the $D$ and $\Delta$ directional covariant derivatives is
\begin{equation}\label{DelDcom}
\Delta D- D \Delta=(\gamma + \bar{\gamma}) D + (\varepsilon + \bar{\varepsilon}) \Delta - (\bar{\tau}+\uppi)\delta -(\tau+\bar{\uppi})\bar{\delta}.
\end{equation}
For a perfect fluid EMT \eqref{EMT}, the corresponding Ricci tensor can be expressed in NP tetrad decomposition as
\begin{equation}
R_{ab}=\frac{1}{2}(\rho+p)k_{a}k_{b}+\frac{1}{2}(\rho+p)l_{a}l_{b}+ p(k_{a}l_{b}+l_{a}k_{b})+\frac{1}{2}(\rho-p)(m_{a}\bar{m}_{b}+\bar{m}_{a}m_{b}).
\end{equation}

Therefore, we can immediately obtain the following real \textit{Ricci-NP scalars}
\begin{align}\label{NPRicci}
\Phi_{00}= & \frac{1}{2}R_{ab}l^{a}l^{b}=\frac{1}{4}(\rho+p),\nonumber \\
\Phi_{11}= & \frac{1}{4}R_{ab}(l^{a}k^{b}+m^{a}\bar{m}^{b})=\dfrac{1}{8}(\rho+p), \nonumber\\
\Phi_{22}=& \frac{1}{2}R_{ab}k^{a}k^{b}=\frac{1}{4}(\rho+p), \nonumber\\
\Lambda=& \frac{R}{24}=\frac{\rho-3p}{24}.
\end{align}
The complex Ricci-NP scalars are $\Phi_{01}=\Phi_{02}=\Phi_{12}=0$ and their complex conjugates $\left\{\Phi_{10}, \Phi_{20}, \Phi_{21}\right\}$ also vanish. 

Since, the LRS-II spacetime is Petrov type D ($\Psi_{2}\neq 0$), we have
\begin{equation}
\vert\Psi_{2}\vert=\pm \frac{\mathcal{E}}{2},
\end{equation}
and the following \textit{Weyl-NP scalars} vanish
\begin{equation}
\Psi_{0}=\Psi_{1}=\Psi_{3}=\Psi_{4}=0.
\end{equation}
We now use \eqref{del_eu} to obtain the relevant \textit{NP spin coefficients} in terms of the semi-tetrad scalars for LRS-II spacetime

\begin{align}\label{NPSpin}
&\kappa\equiv -m^{a}Dl_{a}=0,\,\,\,\tau\equiv -m^{a}\Delta l_{a}=0,\,\,\,\sigma\equiv -m^{a}\delta l_{a}=0,\,\,\, \varrho\equiv -m^{a}\bar{\delta} l_{a}=\frac{1}{2\sqrt{2}}\left(\Sigma-\frac{2\Theta}{3}+\phi\right),\nonumber\\
&\uppi \equiv \bar{m}^{a} D k_{a}=0,\,\,\,\nu \equiv \bar{m}^{a} \Delta k_{a}=0,\,\,\,\lambda \equiv \bar{m}^{a} \bar{\delta} k_{a}=0,\,\,\, \mu \equiv \bar{m}^{a} \delta k_{a}=\frac{1}{2\sqrt{2}}\left(\phi-\Sigma+\frac{2\Theta}{3}\right).
\end{align}

Here $\uppi$ denotes one of the NP spin coefficients; it should not be confused with the mathematical constant $\pi$. We now consider the following two spin coefficients
\begin{align}\label{vepsilon}
\varepsilon &\equiv -\frac{1}{2}\left(k^{a}Dl_{a}-\bar{m}^{a}D m_{a}\right),\\
\gamma &\equiv -\frac{1}{2}\left(k^{a}\Delta l_{a}-\bar{m}^{a}\Delta  m_{a}\right).\label{gamma}
\end{align}
From \eqref{dirder1} and \eqref{dirder3} we can immediately conclude that the first terms ($k^a Dl_a$ and $k^a\Delta _a$) of the above two expressions are purely real. We then use the normalisation condition $m_a\bar{m}^a=1$ and act on that with real operators $D$ and $\nabla$ to get 
\begin{align}
&\bar{m}^a Dm_a + m^a D\bar{m}_a=0=\bar{m}^a Dm_a +\overline{\bar{m}^a Dm_a},\\
&\bar{m}^a \Delta m_a + m^a \Delta\bar{m}_a=0=\bar{m}^a \Delta m_a +\overline{\bar{m}^a \Delta m_a}.
\end{align}
Thus, we conclude that in \eqref{vepsilon}--\eqref{gamma}, $\bar{m}^{a}D m_{a}$ and $\bar{m}^{a}\Delta  m_{a}$ are the purely imaginary parts of $\varepsilon$ and $\gamma$ respectively.  Hence, we get the real part of $\epsilon$ and $\gamma$ as the following
\begin{align}\label{vepsilonreal}
\Re(\varepsilon) &\equiv \dfrac{\varepsilon+\bar{\varepsilon}}{2}=-\frac{1}{2\sqrt{2}}\left(\mathcal{A}-\frac{\Theta}{3}-\Sigma\right),\\
\Re(\gamma) &\equiv \dfrac{\gamma+\bar{\gamma}}{2}=-\frac{1}{2\sqrt{2}}\left(\mathcal{A}+\frac{\Theta}{3}+\Sigma\right).\label{gammareal}
\end{align}
In the subsequent sections we only deal with the real parts of these spin coefficients and therefore the imaginary parts are not relevant for our analysis. The remaining $\alpha$ and $\beta$ could not be expressed in terms of semi-tetrad scalars, however they are not needed for our analysis. We consider the following commutator relation in NP formalism
\begin{equation}
\bar{\delta}\delta-\delta\bar{\delta}=(\bar{\mu}-\mu)D+(\bar{\varrho}-\varrho)\Delta+(\alpha-\bar{\beta})\delta-(\bar{\alpha}-\beta)\bar{\delta}.
\end{equation}
It is clear that in LRS-II spacetime, there should not be any $D$ and $\Delta$ components in the right hand side (RHS) of the relation. Therefore, $\mu$ and $\varrho$ are real quantities, which can also be seen in \eqref{NPSpin}. We also write the four equations of \textit{NP-Bianchi indentities} that are useful for our analysis as

\begin{align}
    \label{Bianchi1}
    D \Psi_{2} +\Delta \Phi_{00} + 2D\Lambda &= 3\varrho \Psi_{2} + (2\gamma + 2\bar{\gamma} - \bar{\mu})\Phi_{00}  + 2\varrho \Phi_{11}, \\[1ex]
    \label{Bianchi2}
    \Delta \Psi_{2} + D \Phi_{22} + 2\Delta \Lambda &= -3\mu \Psi_{2} + (\bar{\varrho} - 2\epsilon - 2\bar{\epsilon})\Phi_{22}  - 2\mu \Phi_{11}, \\[1ex]
     \label{Bianchi3}
    D \Phi_{11} + \Delta \Phi_{00} + 3D\Lambda &= (2\gamma - \mu + 2\bar{\gamma} - \bar{\mu})\Phi_{00}  + 2(\varrho + \bar{\varrho})\Phi_{11}, \\[1ex]
     \label{Bianchi4}
    D \Phi_{22} + \Delta \Phi_{11} + 3 \Delta \Lambda &= (\varrho + \bar{\varrho} - 2\epsilon - 2\bar{\epsilon})\Phi_{22}  - 2(\mu + \bar{\mu})\Phi_{11}.
\end{align}

\section{The context: Thermodynamics of free gravitational field}

We begin by mathematically contextualising the temperature for free gravitational fields. In order to do thermodynamics of gravitational fields, we need to have a EMT like quantity for the free fields (gravitational field without matter). According to the WCH it has to be a function of the Weyl tensor. However, Penrose did not propose any specific quantitative measure for it. Crucial insight came when Maartens and Basset \cite{E7} proved that the Bel-Robinson tensor

\small \begin{equation}\label{belrob}
T_{a b c d} \equiv \frac{1}{4} \left( C_{e a b f}C^{e \; \; \; \; f}_{\; \; c d \;
\;}+C^{*}_{\; e a b f}C^{* \; e \; \; \; \; f}_{\; \; \; \; \; c d
\; \;} \right)
\end{equation}
is a unique Maxwellian tensor that can be constructed from the Weyl tensor which acts as the \textit{super energy momentum} tensor for gravitational fields. The only issue with this tensor is that its dimension is not $L^{-2}$ (like usual EMT), rather $ L^{-4} $. To resolve this issue, Clifton, Ellis and Tavakol (CET) \cite{CET} proposed that the symmetric 2-index square root $\mathscr{T}^{grav}_{ab}$ of the Bell-Robinson tensor would act as the \textit{effective energy momentum tensor} for the free gravitational field. The authors showed that (the square root may not exist for general spacetimes) a unique square root exists (up to an arbitrary trace) for spacetimes that are Petrov type D i.e., a Coloumb like gravitational field, or Petrov type N, which describes a wave like gravitational field. In order to construct a gravitational analogue of the following thermodynamic equation 
\begin{equation}\label{second}
T_{grav} d S_{grav} =  dU_{grav} + p_{grav} dV,
\end{equation}
the authors employed the energy-momentum conservation equation for relativistic thermodynamics to obtain the following
\begin{equation}
(\rho_{grav} v)\dot{\;} +p_{grav} \dot{v} = v \Big[- u_a \nabla_{b}{\mathscr{T}_{grav}^{ab}} - \nabla_{b}q^b_{grav} - \dot{u}^a q^{grav}_{a} -\sigma_{ab} \pi_{grav}^{ab}\Big],
\end{equation}
which is like a relativistic version of \eqref{second}. Here $T_{grav}$, $S_{grav}$ and $U_{grav}$ represent the effective temperature, entropy and internal energy of the free gravitational field, respectively, and $V$ is the spatial volume. Therefore, we obtain the following fundamental thermodynamic relation
\begin{equation}
T_{grav} \dot{s}_{grav} \equiv (\rho_{grav} v)\dot{\;} +p_{grav} \dot{v}\,,
\end{equation}
where $v$ is a spatial volume element and $s_{grav}$ is the gravitational entropy density. In Petrov type D spacetimes (containing the LRS-II class) the symmetric 2 index square root of the Bell-Robinson tensor $\mathscr{T}^{grav}_{ab}$ can be written as
\begin{equation}
8\pi\mathscr{T}^{grav}_{ab}=\vert\Psi_{2}\vert\Big\{2(u_{a}u_{b}-e_{a}e_{b})+N_{ab}\Big\},
\end{equation}
with
\begin{equation}
8\pi\rho_{grav}=2\vert\Psi_{2}\vert,\,\,\,p_{grav}=0,\,\,\,8\pi\Pi_{grav}=-2\vert\Psi_{2}\vert,\,\,\,Q_{grav}=0.
\end{equation}
Here, the subscript \textit{grav} represents the gravitational counterpart (analogous) of the usual thermodynamic (matter) scalars. Therefore, due to vanishing pressure and heatflux terms, we can write the following thermodynamic identity
\begin{equation}
T_{grav}\dot{s}_{grav}=(\rho_{grav} v)\dot{\;}=-v\Big[\sigma_{ab}\pi^{ab}_{grav}+u_a \nabla_b \mathscr{T}^{ab}_{grav} \Big],
\end{equation}
where 
\begin{equation}\label{divT}
u_a \nabla_b \mathscr{T}^{ab}_{grav}=- \dot{\mathcal{E}}+\frac{3}{2}\mathcal{E}\left(\Sigma-\frac{2\Theta}{3}\right).
\end{equation}
The above quantity \eqref{divT} is proportional to $(\rho+p)\Sigma$ for a perfect fluid. Therefore, this term vanishes in vacuum, implying free gravitational field energy is conserved in absence of matter. This divergence signifies an energy exchange type of term, and in presence of matter it is not conserved and demands an extra \textit{interaction tensor} $ \Upsilon_{ab}^I $ \cite{R17}, giving rise to the process of \textit{gravitational induction}. Therefore, the energy dynamics of free gravitational fields is very rich. Now, if we want to obtain the entropy of a specific gravitational system, we can simply write
\begin{equation}
\delta s_{grav}=\dfrac{\delta(\rho_{grav}v)}{T_{grav}}
\end{equation}
and integrate over a spacelike hypersurface. If we want to proceed further and compute the above integral, we need a working formula for the gravitational temperature $T_{grav}$. In the paper \cite{CET}, the authors thus proposed the following expression for the local temperature of free gravitational fields
\begin{equation}\label{Temp}
T^{\text{CET}}_{\text{grav}} = \frac{\vert l^a k^b \nabla_{b}u_a \vert}{\pi} = \frac{\vert \dot{u}_a e^a + \Theta/3 + \sigma_{ab} e^a e^b \vert}{2 \pi}.
\end{equation}
Now that we put the gravitational temperature in a proper context, we proceed to the geometrical aspects of our null geodesics in the next section.

\section{The problem of affine parametrisation of null geodesics in LRS-II spacetime}

To start with, we quote the famous theorem by Wainwright \cite{wr}:
\begin{thm}
For all locally rotationally symmetric (LRS) spacetimes with a perfect fluid (or dust) which are not conformally flat, the Weyl tensor is type $[2, 2]$ and the two repeated principal null congruences are geodesic and shearfree.
\end{thm}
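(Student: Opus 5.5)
The plan is to verify the theorem of Wainwright directly in the LRS-II setting, using the dictionary between the 1+1+2 scalars and the Newman--Penrose quantities already set up in the previous section. We take the null tetrad built from $u^a\pm e^a$ with $N_{ab}=2m_{(a}\bar m_{b)}$. The statement has three parts, which we check in order:
\begin{itemize}
\item the Weyl tensor has Petrov type $[2,2]$ (type D);
\item the two real null directions $l^a,k^a$ are its repeated principal null directions;
\item these two congruences are geodesic and shearfree.
\end{itemize}
Conformal flatness is excluded, so $\mathcal{E}\neq 0$ is assumed throughout.

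For the algebraic part, we use that in LRS-II the magnetic Weyl part vanishes, $H_{ab}=0$. Local rotational symmetry about $e^a$ forces $E_{ab}=\mathcal{E}\left(e_ae_b-\tfrac12 N_{ab}\right)$, since the only symmetric tracefree tensor invariant under rotations in the 2-sheet has this form.

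Projecting $C_{abcd}=$ (built from $E_{ab}$ alone) onto the tetrad, we aim to find that every Weyl--NP scalar carrying a net spin weight involves a contraction with $m^a$ or $\bar m^a$ against $E_{ab}$ in a way that vanishes. This gives $\Psi_0=\Psi_1=\Psi_3=\Psi_4=0$, while $\Psi_2\propto \mathcal{E}\neq0$. The vanishing of $\Psi_0,\Psi_1$ and of $\Psi_3,\Psi_4$ is exactly the statement that $l^a$ and $k^a$ are each doubly repeated principal null directions. This establishes type $[2,2]$, i.e.\ type D with principal null directions along $u^a\pm e^a$.

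For the kinematical part, we read off the spin coefficients from \eqref{NPSpin}, which were derived from \eqref{del_eu}. Geodesy of $l^a$ and $k^a$ is $\kappa=0$ and $\nu=0$, which is equivalent by \eqref{dirder1}--\eqref{dirder2} to $Dl^a\propto l^a$ and $\Delta k^a\propto k^a$. Shearfreeness is $\sigma=0$ and $\lambda=0$. Each of these follows because $\nabla_a u_b$ and $\nabla_a e_b$ in \eqref{del_eu} contain only the structures $u_ae_b$, $e_ae_b$, $u_au_b$, $e_au_b$ and $N_{ab}$. Contracting with $m^b$ therefore produces only terms proportional to $N_{ab}m^b=m_a$; contracting these further with $l^a$, $k^a$ or $m^a$ gives zero, since $m_am^a=0$ and $l^am_a=k^am_a=0$.

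One could alternatively appeal to the Goldberg--Sachs theorem in its generalized (Kundt--Th\'{e}ompson) form, since $\Phi_{01}=\Phi_{02}=\Phi_{12}=0$. The direct computation is, however, self-contained.

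The main obstacle is justifying the form of \eqref{del_eu} and of $E_{ab}$ in full generality for LRS perfect fluids, rather than only for LRS-II. In LRS classes I and III there are also rotation and twist, and $H_{ab}\neq0$ in general. For those classes one must repeat the argument with the extra LRS scalars, such as the vorticity $\Omega$ and the twist $\xi$. One then checks that the additional terms enter only through the structures $\varepsilon_{ab}$ and $N_{ab}$, and these still annihilate the $m^am^b$ and $l^am^b$ contractions.

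Since the paper needs only the LRS-II case, I would present that case explicitly and cite \cite{wr} for the remaining classes.
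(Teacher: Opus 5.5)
Your plan is correct for LRS-II. It is also a genuinely different route from the paper's: the paper gives no proof of this statement, but quotes it from Wainwright \cite{wr} and uses it only to obtain \eqref{geodesic0} and \eqref{cond1}.

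The direct verification works and is not circular. With $H_{ab}=0$ and $E_{ab}=\mathcal{E}(e_ae_b-\tfrac12N_{ab})$, the only non-vanishing Weyl--NP scalar is $\Psi_2\propto\mathcal{E}$. From \eqref{del_eu},
$$m^b\nabla_au_b=(\tfrac13\Theta-\tfrac12\Sigma)\,m_a,\qquad m^b\nabla_ae_b=\tfrac12\phi\,m_a .$$
Hence $\kappa=\sigma=\nu=\lambda=0$ (and $\tau=\uppi=0$), while $\varrho$ and $\mu$ survive. This matches \eqref{NPSpin}, which the paper derived from \eqref{del_eu} and not from Wainwright.

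Your route buys two things over the citation:
\begin{itemize}
\item It shows that Section 2 of the paper already contains the LRS-II case needed for Theorem 2.
\item It shows the conclusion follows purely from the isotropy: no field equation and no property of the perfect fluid is used.
\end{itemize}

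What the citation buys is the LRS classes I and III. The ``main obstacle'' you flag there is mild. On the sheet $\epsilon_{ab}m^b=\pm i\,m_a$, so the $\Omega\epsilon_{ab}$ and $\xi\epsilon_{ab}$ terms again give multiples of $m_a$. Also $H_{ab}=\mathcal{H}(e_ae_b-\tfrac12N_{ab})$ has the same LRS form as $E_{ab}$, so only $\Psi_2\propto\mathcal{E}\pm i\mathcal{H}$ survives. A shorter argument covers all LRS classes at once. The isotropy acts as $m^a\to e^{i\theta}m^a$ with $l^a,k^a$ fixed, for a continuum of $\theta$. So every quantity of non-zero spin weight ($\kappa,\sigma,\tau,\nu,\lambda,\uppi,\Psi_0,\Psi_1,\Psi_3,\Psi_4$) must vanish.

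You should drop the aside about the generalized Goldberg--Sachs (Kundt--Thompson, Robinson--Schild) theorem. That theorem assumes $R_{ab}l^al^b=R_{ab}l^am^b=R_{ab}m^am^b=0$, i.e.\ $\Phi_{00}=\Phi_{01}=\Phi_{02}=0$, not merely $\Phi_{01}=\Phi_{02}=\Phi_{12}=0$. For a perfect fluid, $R_{ab}n^an^b=(\rho+p)(u_an^a)^2\neq0$ for every null $n^a$; here $\Phi_{00}=\Phi_{22}=\tfrac14(\rho+p)$. This is precisely why a separate result is needed for perfect fluids. The work is done by the LRS symmetry, not by the vanishing of the mixed Ricci components.
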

This immediately gives us the following equations of geodesics (from \eqref{dirder1}--\eqref{dirder2}):
\begin{align}\label{geodesic0}
D l^{a} & =(\varepsilon + \bar{\varepsilon})l^{a}=2\Re(\varepsilon) l^{a}, \nonumber\\
\Delta k^{a} & =-(\gamma + \bar{\gamma})k^{a}= - 2\Re(\gamma)  k^{a}.
\end{align}
Also the shearfree geodesic condition provides us with the following vanishing spin coefficients
\begin{equation}\label{cond1}
\kappa=\sigma=0, \nu=\lambda=0.
\end{equation}
Now we ask the following question: Can we affinely parametrize both (ingoing and outgoing) the null geodesics simultaneously without any loss of generality? To answer this question, we now state and prove a very important theorem for perfect fluid spacetimes in LRS-II symmetry.

\begin{thm}
In a LRS-II spacetime, both ingoing and outgoing radial null geodesics can be parametrised affinely, if and only if the spacetime is Minkowski.
\end{thm}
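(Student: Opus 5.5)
The plan is to translate simultaneous affine parametrisation into conditions on the 1+1+2 scalars, and then show that these conditions force all of $\mathcal{D}$ to vanish.

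By \eqref{geodesic0}, both radial null congruences are affinely parametrised precisely when $\Re(\varepsilon)=\Re(\gamma)=0$. Using \eqref{vepsilonreal}--\eqref{gammareal}, this is equivalent to
\begin{equation*}
\mathcal{A}-\frac{\Theta}{3}-\Sigma=0,\qquad \mathcal{A}+\frac{\Theta}{3}+\Sigma=0,
\end{equation*}
that is, $\mathcal{A}=0$ and $\Sigma=-\Theta/3$. The "if" direction is immediate: in Minkowski space we may take $u^a$ and $e^a$ with $\mathcal{A}=\Theta=\Sigma=0$ (the standard static observers), so both spin coefficients vanish. Strictly, the statement concerns the Weyl-adapted tetrad, which is only defined up to the class III boost factor $A$ already fixed to $1$. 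The argument should therefore be run with $A=1$ as in the paper's setup, and I will remark on this.

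For the "only if" direction I will substitute $\mathcal{A}=0$ and $\Sigma=-\Theta/3$ into the field equations.

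First, \eqref{Qdot} gives $\hat p=0$. Next, \eqref{Sigthetahat} becomes $-\hat\Theta=\tfrac12\phi\Theta$.

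Then \eqref{Sigthetadot} gives $-\dot\Theta=2(\tfrac{\Theta}{2})^2+\tfrac13(\rho+3p)-\mathcal{E}$. The Raychaudhuri equation \eqref{Raychaudhuri} gives $-\dot\Theta=\tfrac13\Theta^2+\tfrac16\Theta^2+\tfrac12(\rho+3p)$. Subtracting the two yields an algebraic relation
\begin{equation*}
\mathcal{E}=-\tfrac16(\rho+3p).
\end{equation*}

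I will then propagate this constraint, applying the dot and hat derivatives and using \eqref{ehat}, \eqref{edot} and \eqref{Qhat}, together with \eqref{phidot}, which now reads $\dot\phi=-\tfrac12\phi\Theta$. The aim is a chain of algebraic consistency conditions. The NP Bianchi identities \eqref{Bianchi1}--\eqref{Bianchi4} with $\varepsilon+\bar\varepsilon=\gamma+\bar\gamma=0$ serve as a cross-check: the combinations \eqref{Bianchi1}$-$\eqref{Bianchi3} and \eqref{Bianchi2}$-$\eqref{Bianchi4} isolate $D\Psi_2$ and $\Delta\Psi_2$ in terms of $\varrho,\mu$ and the matter terms, with no $\Re(\varepsilon)$ or $\Re(\gamma)$ contributions.

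The goal of this stage is to obtain $\rho+p=0$ and $\rho=p=0$. From there the constraint gives $\mathcal{E}=0$, so the spacetime is conformally flat and, being a vacuum, flat.

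The main obstacle is closing this chain, since the constraints might only restrict the solution to a special family (for example de Sitter-like or Einstein static cases) rather than force flatness. My first attempt will be to use the commutator \eqref{DelDcom}, which now reduces to $[\Delta,D]=0$, applied to $\Theta$ and $\phi$. Since $\Theta$ and $\phi$ appear in the definitions of $\varrho$ and $\mu$, this should yield conditions of the form $\phi\Theta(\rho+p)=0$ and $\mathcal{E}\phi=0$.

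I will then split into cases. If $\phi=0$, equation \eqref{hatphinl} together with the Gauss curvature relation \eqref{gauss} (which requires $K>0$) gives a contradiction unless the matter and Weyl terms balance, and this should force $K$, and hence everything, to degenerate. If $\phi\neq0$, I expect $\mathcal{E}=0$ and, via \eqref{ehat}, $\hat\rho=0$, which combined with the remaining equations yields $\rho=p=\Theta=0$. If an exceptional non-flat solution survives, I will check whether the Wainwright non-conformal-flatness hypothesis, or the requirement that the null vectors be the Weyl principal directions, excludes it.
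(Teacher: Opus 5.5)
Your reduction is correct and more direct than the paper's. With $A=1$ the hypothesis is exactly $\mathcal{A}=0$, $\Sigma=-\Theta/3$. Subtracting \eqref{Raychaudhuri} from \eqref{Sigthetadot} then gives $\mathcal{E}=-\frac{1}{6}(\rho+3p)$, which is the paper's constraint \eqref{cons0} in 1+1+2 form (since $\Phi_{11}-\Lambda=\frac{1}{12}(\rho+3p)$, up to the sign convention for $\Psi_2$). The gap is the step you yourself flag, and it cannot be closed: the chain ends in non-flat solutions.

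Propagating the constraint with \eqref{ehat}, \eqref{edot}, \eqref{Qhat} and \eqref{Qdot} gives $\hat p=0$, $\hat\rho=3\phi\mathcal{E}$ and $\dot p=3\Theta\mathcal{E}$. The commutator \eqref{commute}, i.e.\ $[\Delta,D]=0$, adds nothing new when applied to $\Theta$ or $\phi$. Applied to $\rho$ or $p$ it gives the single condition $\phi\Theta\mathcal{E}=0$, not the pair $\phi\Theta(\rho+p)=0$, $\phi\mathcal{E}=0$ you anticipate.

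If $\phi=0$, \eqref{hatphinl} gives $\mathcal{E}=-\frac{2}{3}\rho$, hence $p=\rho$, and \eqref{gauss} gives $K=\rho-\frac{1}{4}\Theta^2$, which can be positive. An explicit example is
\begin{equation*}
ds^2=-dt^2+dx^2+(R^2-t^2)\,d\Omega^2,\qquad \rho=p=\frac{R^2}{(R^2-t^2)^2},\qquad |t|<R.
\end{equation*}
This is a stiff-fluid Kantowski--Sachs model with $\mathcal{A}=0$, $\Sigma=-\Theta/3$, $\phi=0$, $K=1/(R^2-t^2)$ and $\mathcal{E}=-\frac{2}{3}\rho\neq0$. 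The plane analogue $-dt^2+dx^2+t\,(dy^2+dz^2)$, with $\rho=p=1/(4t^2)$, does the same for $K=0$. So your $\phi=0$ case yields no contradiction.

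Likewise, if $\Theta=0$ the $\dot\Theta$ equation forces $\rho+3p=0=\mathcal{E}$. One then obtains the Einstein static universe $-dt^2+dx^2+a^{-2}\sin^2(ax)\,d\Omega^2$ with $\rho=3a^2$, $p=-a^2$ and $\phi\neq0$. This contradicts your expectation that $\phi\neq0$ leads to $\rho=p=\Theta=0$. Only the sub-branch $\mathcal{E}=0$, $\Theta\neq0$ forces $\rho=0$ and hence Minkowski.

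Your fallback does not rescue the statement. The Kantowski--Sachs model is Petrov type D with $u^a\pm e^a$ as its repeated principal null directions. It satisfies every energy condition ($c_s^2=1$), and both radial null congruences are affinely parametrised in precisely the paper's tetrad. Wainwright's non-conformal-flatness hypothesis can only discard the Einstein static case. Freeing the boost $A$ does not help either, since it only weakens the hypothesis: for arbitrary $A$ the condition is just flatness of the 2-metric on the $(u,e)$ orbit space.

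What the propagation argument actually delivers is a classification: Minkowski, the Einstein static universe, and the homogeneous stiff-fluid LRS models. So the ``only if'' direction is false as stated, and no completion of the argument exists.

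The paper's own proof does not get around this. It reaches the same constraint via \eqref{NPf}, but in case (d) it imposes $\Sigma-\frac{2}{3}\Theta=0$ as a frame choice. That choice is unavailable once $u^a$ is the fluid velocity and $A=1$; under the hypotheses it is the physical condition $\Theta=0$, i.e.\ the Einstein static branch. Moreover, once $\mathcal{A}=0$ and $\Sigma=-\Theta/3$ are inserted, its sound-speed relation reduces to $c_s^2\,\phi\Theta=0$, which both non-flat branches satisfy.
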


\begin{proof}
We prove this theorem by contradiction. Therefore, let us assume that for any given spacetime (either vacuum or with a perfect fluid obeying all energy conditions) both the ingoing and outgoing null geodesics are affinely parametrized. Hence, from \eqref{geodesic0} we obtain 
\begin{equation}\label{cond2}
\Re(\varepsilon)=\Re(\gamma)=0.
\end{equation}
Next, we employ the commutator relation between the $D$ and $\Delta$ directional covariant derivatives \eqref{DelDcom}. 
As we are considering LRS II spacetimes, $\delta$ and $\bar{\delta}$ components should  vanish in the commutator relation due to rotational symmetry. Therefore, LRS II symmetry demands 
\begin{equation}\label{cond3}
(\bar{\tau}+\uppi)=(\tau+\bar{\uppi})=0.
\end{equation}
Similarly, we employ the directional derivatives of the real null vectors along each others direction \eqref{dirder3}--\eqref{dirder4}. Since, we are in LRS II spacetimes, the components along the complex null vectors should vanish (on the 2 shell) and therefore, due to the spacetime symmetry we get  
\begin{equation}\label{cond4}
\tau=\uppi=0.
\end{equation}
We now consider the following NP field equation
\begin{equation}\label{NPf}
D\gamma-\Delta\varepsilon =  \, (\tau+\bar{\uppi})\alpha+(\bar{\tau}+\uppi)\beta-2\Re(\varepsilon)\gamma-2\Re(\gamma)\varepsilon   +\tau\uppi - \nu \kappa  +\Psi_{2}+\Phi_{11}-\Lambda.
\end{equation}
Therefore, in LRS II spcaetime when both the ingoing and outgoing null geodesics are affinely parametrized, we substitute \eqref{cond1}--\eqref{cond4} in the above equation \eqref{NPf} to obtain 
\begin{equation}\label{precons}
D\gamma-\Delta\varepsilon =\Psi_{2}+\Phi_{11}-\Lambda.
\end{equation}
Since $l^a$ and $k^a$ are real null vectors and we are in LRS II symmetry, the spin coefficients $\epsilon$ and $\gamma$ are real quantities (as can be seen from \eqref{geodesic0}) and therefore, using \eqref{cond2} we also have 
\begin{equation}\label{cond5}
\epsilon = \gamma = 0.
\end{equation}
Hence, for a perfect fluid matter field (using \eqref{cond5} in \eqref{precons}) we have the following constraint equation
\begin{align}\label{cons0}
\Psi_{2}+\Phi_{11}-\Lambda=0.
\end{align}
We also demand that the above condition $ \mathcal{C}\equiv \Psi_{2}+\Phi_{11}-\Lambda=0  $ is to be satisfied during the evolution and propagation of the system. Therefore, we demand 
\begin{equation}\label{DdeltaC}
D\mathcal{C}=0 ,\,\,\,\ \Delta \mathcal{C}=0.
\end{equation}
Using the Bianchi identities \eqref{Bianchi1}--\eqref{Bianchi4} the respective conditions \eqref{DdeltaC} become the following
\begin{align}
\Delta \Phi_{00}+3D\Lambda + \frac{3}{2}(\mu \Phi_{00}-\varrho \Psi_{2})-3\varrho \Phi_{11}=0, \label{DC}\\
D \Phi_{22}+3 \Delta \Lambda + \frac{3}{2}(\mu \Psi_{2}-\varrho \Phi_{22})+3\mu \Phi_{11}=0. \label{DelC}
\end{align}
By substituting $\Psi_{2}$ from \eqref{DC} in \eqref{DelC}, we obtain the following condition that constrains the matter field
\begin{equation}\label{cons1}
\frac{3}{2}(\mu^2 - \varrho^2)\tilde{\Phi}+3(\mu D\Lambda + \varrho \Delta \Lambda)+ \mu \Delta \tilde{\Phi}+\varrho D\tilde{\Phi}=0,
\end{equation}
where the corresponding NP-Ricci scalars are given by \eqref{NPRicci} with $\tilde{\Phi}\equiv\Phi_{00}=\Phi_{22}$.

For the transparency of understanding, we now consider the four cases separately:
\begin{itemize}
\item[(a)] \textbf{Vacuum:} For the trivial vacuum spacetime we have $ \rho=p=0 $ and therefore, the constraint Eq\eqref{cons0} implies a conformally flat spacetime, i.e.,
\begin{equation}
\Psi_{2}=0.
\end{equation}
This is only possible if the spacetime is Minkowski. Therefore, the vacuum Minkowski spacetime is a solution of Eq\eqref{cons0}. 

\item[(b)] \textbf{$\mathbf{\Lambda}$-Vacuum:} It is to be noted that $ (\rho+p)=0 $ is also possible in a $ \mathbf{\Lambda}$-vacuum, i.e., de Sitter (dS) and Anti-de Sitter (AdS) spacetimes, with $ \mathbf{\Lambda}$ as the cosmological constant and $\rho=-p=\mathbf{\Lambda}$. However, again from the original constraint equation \eqref{cons0} it becomes clear that
\begin{equation}
\Psi_{2} =\Lambda=\frac{\mathbf{\Lambda}}{6},\,\,\, \text{with}\,\,\, \Phi_{11}=0.
\end{equation}
Now, we know that both de Sitter (dS) and Anti-de Sitter (AdS) spacetimes are conformally flat and therefore $\Psi_{2} =0$. Hence for the constraint \eqref{cons0} to be satisfied, the only possibility is the trivial ($ \mathbf{\Lambda}=0 $) Minkowski spacetime.
 
\item[(c)]\textbf{Dust:} Now we check whether any dust solution ($p=0$) is compatible with Eq\eqref{cons0}, i.e.,
\begin{equation}\label{cons00}
\mathcal{E}=\pm \frac{\rho}{6}. 
\end{equation}
Using the evolution and propagation equations for dust matter field we get $\mathcal{A}=0$ and $\dot{\rho}=-\Theta \rho$. We employ the Weyl evolution equation \eqref{edot}
and demand that Eq\eqref{cons0} holds true for all time. Consequently, we obtain that the spacetime has to be shear-free i.e., 
\begin{equation}
\Sigma=0.
\end{equation}
Additionally we use the Raychaudhuri equation \eqref{Raychaudhuri} and write the equation of shear evolution \eqref{Sigthetadot} as 
\begin{equation}
\dot{\Sigma}=-\frac{\Sigma^2}{2}-\frac{2}{3}\Theta \Sigma - \mathcal{E}.
\end{equation}
Now if we demand that $ \Sigma=0 $ should be true for all time ($\dot{\Sigma}=0$), then the spacetime has to be conformally flat, i.e.,
\begin{equation}
 \mathcal{E}=0.
\end{equation}
Moreover, from the Weyl evolution equation \eqref{edot} we obtain $\dot{\mathcal{E}}=-\Theta \mathcal{E}$, therefore the spacetime remains conformally flat for all time. However, we started with the condition \eqref{cons00}. Therefore, no standard dust solution is viable and the only possible solution is 
\begin{equation}
\rho=0,
\end{equation}
 i.e., a flat Minkowski vacuum. This could also be transparently argued that for $\Sigma=0$, the Le\-ma\^itre--Tolman--Bondi (LTB) spacetime is necessarily dust Friedmann--Le\-ma\^itre--Robertson--Walker (FLRW) metric, which is by definition Weyl free \cite{dust}, and under the constraint \eqref{cons00} becomes Minkowski.

\item[(d)]\textbf{General perfect fluid:} From \eqref{cons1} we obtain that when $ (\rho+p)\neq 0 $ (not in vacuum or $\mathbf{\Lambda}$-vacuum), the speed of sound ($c_{s}^2 \neq 0$ i.e., not dust) for a general perfect fluid has to obey the following relation
{\begin{equation}
c_{s}^2=\dfrac{{3\phi}\left(\Sigma-\frac{2\Theta}{3}\right)+\left({3\phi \Theta}-\frac{\mathcal{A}}{c_{s}^2}\left(\Sigma-\frac{2\Theta}{3}\right)   \right)}{{\phi\Theta}+\frac{5\mathcal{A}}{c_{s}^2}\left(\Sigma-\frac{2\Theta}{3}\right)}.
\end{equation}
}Here the covariant directional derivatives of $p$ are related to that of the energy density $ \rho $ through the speed of sound as
\begin{equation}
\dot{p}=c_{s}^{2}\dot{\rho},\,\,\,\hat{p}=c_{s}^{2}\hat{\rho},\,\,\, c_{s}^2\equiv\partial p/\partial \rho.
\end{equation}
Now, we can always fix our frame so that $ \dot{K}=0 $, i.e., $ \Sigma-\frac{2\Theta}{3}=0 $. Therefore the above relation yields
\begin{equation}
 c_{s}^2=3,
\end{equation} 
which is unphysical as it violates the dominant energy condition (DEC).
\end{itemize}
Hence, for all cases it is proved that unless the spacetime is Minkowski, both the null geodesics cannot be affinely parametrised.
\end{proof}

Therefore, in LRS II spacetimes with perfect fluid, at least one of the null geodesics has to be non-affinely parametrised without any loss of generality.
Hence, we take the outgoing null geodesic to be non-affinely parametrised   and the resulting geodesic equations become 
\begin{equation}\label{nonaff}
\Delta k^{a} = -2\Re(\gamma)k^{a}.
\end{equation}
Let us call the factor $-2\Re(\gamma)\equiv \mathcal{K} $
as the `non-affinity factor' of an outgoing null geodesic, signifying its degree of non-affinity.

\section{Temperature of gravitational field as non-affinity factor}
From the arguments above, we now propose that this non-affinity factor can be identified as the temperature of the free gravitational field. Hence, we define the local temperature of the free gravitational field as the following
\begin{Def}
In any open set $\mathscr{U}$ of manifold $\mathscr{M}$, the local temperature function of the free gravitational field is proportional to the non-affinity factor of the outgoing radial null geodesics of the Newman-Penrose tetrad in LRS-II spacetimes. In other words,  the real part of the spin coefficient $\gamma$ determines the local gravitational temperature, i.e., 
\begin{equation} \label{T_new}
T_{grav}=\Big\vert\frac{\mathcal{K}}{\sqrt{2}\pi}\Big\vert = \left\vert \frac{1}{\sqrt{2}\pi} (\gamma + \bar{\gamma}) \right\vert.
\end{equation}
\end{Def}
The $\sqrt{2}$ in the denominator is due to the normalisation of null vectors, for clarity and consistency we have kept our vector convention similar to the CET paper \cite{CET}. Now, we proceed to show that the above definition is consistent with other propositions of gravitational temperature in their appropriate limits. We further analyse the proposed local temperature of the free gravitational field in the context of two separate cases to establish our point.
\subsection{Local CET temperature of the free gravitational field}
In the CET paper \cite{CET}, the temperature was defined as the projection of the covariant derivative of unit timelike vector $u^a$, onto the plane spanned by the ingoing and outgoing null vectors $k^a, l^a$ \eqref{Temp}. Using the $\Delta$ directional derivative of the null vectors \eqref{dirder2}--\eqref{dirder3}, Surprisingly, we find that the $ T^{\text{CET}}_{\text{grav}} $ is directly proportional to the non-affinity factor of the outgoing null geodesics and is identical to our proposed definition \eqref{T_new}. 
Now that we have validated our definition of $ T_{grav} $ in NP formalism, we proceed to test it in the 1+1+2 semi-tetrad formalism. We start with a congruence of null geodesics \eqref{nonaff} which is not affinely-parametrized with the equation 
\begin{equation}
\Delta k_{b}\equiv k^{a}\nabla_{a}k_{b}=\mathcal{K} k_{b},\Rightarrow \mathcal{K}=-l^{b}k^{a}\nabla_{a}k_{b},
\end{equation}
where the quantity $\mathcal{K}$ measures the failure of $k^{a}$ to be affinely-parametrized. Now, in terms of our semi-tetrad basis vectors $\{u^a, e^a\}$, we express our null vectors $\{k^a, l^a\}$. Subsequently, by using the full covariant derivative of $e_a$ and $u_a$ in \eqref{del_eu}, we obtain the `non-affinity factor' in semi-tetrad decomposition as
\begin{equation}
\mathcal{K} =\frac{1}{\sqrt{2}}\left(\mathcal{A}+\frac{\Theta}{3}+\Sigma\right).
\end{equation}
Since, $ \mathcal{K}=-(\gamma+\bar{\gamma}) $ (also notice \eqref{gammareal}), we finally obtain the following
\begin{align}
T^{\text{CET}}_{\text{grav}} &\equiv \frac{\mid l^{a}k^{b}\nabla_{b}u_{a}\mid}{\pi}\nonumber\\
& =\left\vert \frac{1}{\sqrt{2}\pi} (\gamma + \bar{\gamma})    \right\vert\nonumber\\
&\equiv T_{grav}\nonumber\\
& =\dfrac{\left\vert \mathcal{A}+\frac{\Theta}{3}+\Sigma\right\vert}{2\pi}.
\end{align}
This makes our approach consistent in both the paradigms. Therefore, we establish that the local $T_{grav}$ proposed by CET originates from the local non-affinity factor of the null geodesics. In the NP formalism, the spin coefficient $ \gamma $ is a complex scalar that encodes the relative acceleration and rotation between the null tetrad vectors. It's real part measures the gravitational redshift or acceleration experienced along the null rays, while its imaginary part captures frame rotation relative to the transverse spatial directions. In LRS-II spacetimes, $\gamma$ can typically be chosen real and evolves along outgoing null geodesics according to the equation
\begin{equation}
D \gamma=\Psi_{2}+\Phi_{11}-\Lambda,
\end{equation}
directly linking its variation to the coulomb Weyl curvature (Petrov type D) and Ricci curvature. Thus, $\gamma$ acts as a gravitational potential variable that characterises the local gravitational field strength and its evolution along light rays.

\subsection{Black hole horizon temperature}
In LRS-II spacetimes, the symmetry forces the following spin coefficients to vanish identically
\begin{equation}
\kappa = \tau = \sigma = \uppi = \nu = \lambda = 0,
\end{equation}
 with only $\varrho$ and $\mu$ (both real) non-zero. The Frobenius theorem guarantees hypersurface orthogonality, eliminating twist (trivial in LRS-II). In our convention, the outgoing null vector is $k^a$, associated with $\gamma + \bar{\gamma}$, and its 
expansion is governed by $\mu$. Hence, a marginally outer trapped surface is defined by $\mu = 0$. The NP expansion for $k^a$ simplifies everywhere to $k^a \nabla_a k^b = -(\gamma + \bar{\gamma}) k^b$ because $\nu = 0$ in LRS-II. Thus $k^a$ is everywhere geodesic, and $-(\gamma + \bar{\gamma})$ is its non-affinity parameter. The surface gravity $\mathcal{K}_{\mathcal{H}}$ is defined by $k^a \nabla_a k^b = \mathcal{K}_{\mathcal{H}} k^b$ on the horizon ${\mathcal{H}}$ (with appropriately normalised null vector $k^a$), yielding 
the universal relation 
\begin{equation}
\mathcal{K}_{\mathcal{H}} = -(\gamma + \bar{\gamma}).
\end{equation}

In Nielsen's thesis \cite{nielsen}, where the outgoing null vector is $l^a$ (opposite to our vector convention) and surface gravity is $\varepsilon + \bar{\varepsilon}$, he explicitly notes that for a geodesic tangent to $k^a$ (in our convention outgoing) the surface gravity becomes $-(\gamma + \bar{\gamma})$. For Killing horizons, the Killing symmetry together with the Frobenius theorem forces the transverse terms to vanish, making $-(\gamma + \bar{\gamma})$ the 
unique surface gravity. For isolated horizons, the non-expanding condition $\mu = 0$ and LRS-II identities guarantee $\nu = 0$, yielding $\mathcal{K}_{\mathcal{H}} = -(\gamma + \bar{\gamma})$ as an intrinsic invariant. For local horizons, defined by $\mu = 0$, the same relation holds identically, but its interpretation differs: the value depends on the choice of normalization of $k^a$, with different dynamical prescriptions reducing to $-(\gamma + \bar{\gamma})$ evaluated on the horizon. Thus, across all three horizon types, Nielsen's analysis confirms that in LRS-II symmetry, the surface gravity is $\mathcal{K}_{\mathcal{H}} = -(\gamma + \bar{\gamma})$. Therefore, $T_{grav}$ also corresponds to black hole horizon temperature in appropriate limits. To illustrate the point further, we consider the Schwarzschild spacetime, where 
\begin{equation}
\mathcal{A}=\frac{\mathcal{M}}{r^2}\left(1-\frac{2\mathcal{M}}{r}\right)^{-1/2},\,\,\, \Theta=\Sigma=0.
\end{equation}
Therefore, the temperature of free gravitational field becomes
\begin{equation}
T_{grav}=\frac{\mathcal{A}}{2\pi}.
\end{equation}
The temperature measured by a local static observer diverges at the horizon because their proper acceleration becomes infinite, requiring the redshift factor to yield the finite Hawking-Bekenstein (HB) temperature observed at infinity
\begin{equation}
T_{HB}=\frac{\mathcal{A}}{2\pi}\sqrt{1-\frac{2\mathcal{M}}{r}}\Bigg\vert_{r=2\mathcal{M}}=\frac{1}{8\pi\mathcal{M}}.
\end{equation}

This can also be directly observed when we employ the Kinnersley null tetrads (naturally adapted to infinity)  with the normalisation conditions $k^al_a=-1$ 
\begin{align}
k^{a}=& \left(\frac{1}{1-\frac{2\mathcal{M}}{r}}, 1, 0, 0\right),\nonumber\\
l^{a}=& \left(\frac{1}{2}, -\frac{1}{2}\left(1-\frac{2\mathcal{M}}{r}\right), 0, 0 \right).
\end{align}
In this frame the spin coefficient $\gamma$ becomes $\gamma={\mathcal{M}}/{2r^2}$, which directly gives us the standard expression of the surface gravity of Schwarzschild BH
\begin{equation}\label{Kschw}
\kappa_{\mathcal{H}}=\vert \gamma+ \bar{\gamma}\vert_{\mathcal{H}}=\frac{1}{4\mathcal{M}}.
\end{equation} 

Therefore, the local gravitational temperature from the non-affinity factor of null geodesic corresponds to both the CET and BH temperature, creating a consistent picture. 

\section{Hyperbolic transport equation of the gravitational temperature}

Since we are building the concept of temperature via a  geometric perspective, causality is inbuilt in our mechanism. Therefore, the transport equation for free gravitational field is necessarily hyperbolic in nature, contrary to classical fluid mechanics, where it is elliptical in nature.
Using the real part of the \eqref{NPf} we obtain the following
\begin{equation}\label{transport1}
\frac{\pi}{\sqrt{2}}D T_{grav}=\Delta \Re(\epsilon) -2\sqrt{2}\pi \Re(\epsilon) T_{grav}+\Re(\Psi_{2}) + \Phi_{11}-\Lambda,
\end{equation}
where, using \eqref{vepsilonreal}, we can always write the following 
\begin{equation}\label{Deltaep}
\Delta \Re(\epsilon)=-\frac{1}{2\sqrt{2}} \Delta \left(\mathcal{A}-\frac{\Theta}{3}-\Sigma \right)=-\frac{1}{2\sqrt{2}} \left[\Delta \left(\mathcal{A}-\Theta\right)-\Delta \left(\Sigma -\frac{2\Theta}{3}\right) \right].
\end{equation}
In LRS II sapcetime, for any scalar ${f}$ we can always write the following commutation relation 
\begin{equation}\label{commute}
\hat{\dot{{f}}}-\dot{\hat{{f}}}=-\mathcal{A}\dot{f}+\left(\frac{\Theta}{3}+\Sigma\right)\hat{f}.
\end{equation}
We also express the real NP derivatives in terms of the semi-tetrad directional derivatives as
\begin{equation}
\Delta {f}=\frac{1}{\sqrt{2}}(\dot{{f}}+\hat{{f}}),\,\,\, D{f}=\frac{1}{\sqrt{2}}(\dot{{f}}-\hat{{f}}).
\end{equation}
Consequently, using the field equations \eqref{Sigthetahat}, \eqref{Sigthetadot} and \eqref{Raychaudhuri} we obtain the following $\Delta$ derivatives in terms of the 1+1+2 scalars
\begin{align}
\Delta \left(\Sigma- \frac{2\Theta}{3}\right)=&\frac{1}{\sqrt{2}}\left[-\mathcal{A}\phi-\frac{3}{2}\phi \Sigma + \frac{1}{2}\left(\Sigma-\frac{2\Theta}{3}\right)^2+\frac{1}{3}(\rho+3p)-\mathcal{E}\right],\label{Delstatic}\\
\Delta \left(\mathcal{A}-\Theta\right)=&\frac{1}{\sqrt{2}}\left[-(\mathcal{A}+\phi)\mathcal{A}+\frac{\Theta^2}{3}+\frac{3}{2}\Sigma^2+\frac{1}{2}(\rho+3p)+(\dot{A}-\hat{\Theta})\right].
\label{Delatheta}
\end{align}
Here, $\dot{\mathcal{A}}$ and $\hat{\Theta}$ are related to each other by the following relation (for non-dust matter field)
\begin{equation}
\dot{\mathcal{A}}=-\dfrac{1}{c^2_s}(\rho+p)\dfrac{\partial^2 p}{\partial \rho^2} \Theta \mathcal{A}+ c^2_{s} \hat{\Theta}.
\end{equation}
The above relation can be obtained using \eqref{commute} for $p$ along with the field equations. Therefore, using \eqref{Deltaep}, \eqref{Delstatic} and \eqref{Delatheta}, we express the NP $\Delta$ derivative of $\epsilon$ in terms of the 1+1+2 scalars
\begin{equation}\label{Deltaepf}
\Delta \Re(\epsilon)=2\Re(\epsilon)\Re(\gamma)-\frac{1}{4}\left[\frac{1}{6}(\rho+3p)+\dot{\mathcal{A}}-\hat{\Theta}+\frac{3}{2}\phi\Sigma +\mathcal{E}\right].
\end{equation}
Consequently, from \eqref{transport1}, we obtain the following general transport equation of the temperature of the free gravitational field
\begin{equation}\label{transport2}
\dot{T}_{grav}-\hat{T}_{grav}= \left(\mathcal{A}-\frac{\Theta}{3}-\Sigma\right)T_{grav}-\dfrac{1}{2\pi}\left[\frac{1}{6}(\rho+3p)+\dot{\mathcal{A}}-\hat{\Theta}+\frac{3}{2}\phi\Sigma +\mathcal{E}\right]+ \frac{{1}}{\pi} \left[ {\mathcal{E}}+ \dfrac{\rho+3p}{6}\right].
\end{equation}

Since, in LRS II spacetimes the magnetic part of the Weyl tensor vanishes and our geodesics are aligned with the principal null directions, we can take $ \Psi_{2} $ to be real. It is important to note that, \eqref{transport2} is true when both the outgoing and ingoing null geodesics are not affinely parametrised i.e., $\Re(\gamma)$ and $\Re(\epsilon)$ are non-zero.

Now, if we affinely parametrise the ingoing null geodesic i.e., $\Re(\epsilon)=0$ and demand that it remains so during the propagation and evolution of the system, i.e., $\Delta \Re(\epsilon)=0$, (from \eqref{Deltaepf}), then we obtain the following simplified transport equation of the gravitational temperature 
\begin{equation}\label{DT_g}
D T_{grav}=\frac{\sqrt{2}}{\pi}\left[\Psi_{2}+\Phi_{11}-\Lambda\right].
\end{equation}
Now that we have a simple working equation for the dynamics of $T_{grav}$, we can observe some interesting features. Interestingly, we can see how the directional derivative of the gravitational temperature is governed by both the matter and Weyl sector. Physically it makes sense as the temperature depends on the kinematical scalars. In vacuum $\Phi_{11}=\Lambda=0$ we have 
\begin{equation}
DT_{grav}^{vac}=\frac{\sqrt{2}}{\pi}\Psi_{2},
\end{equation}
making the evolution and propagation of the gravitational temperature entirely dependent on the Weyl scalar, as it is purely driven by the free gravitational field. In a conformally flat spacetime ($\Psi_{2}=0$) the variation of temperature  becomes
\begin{equation}
DT_{grav}^{cf}=\frac{\sqrt{2}}{\pi}(\Phi_{11}-\Lambda).
\end{equation}
Interestingly, in this case the free gravitational energy density vanishes (proportional to Weyl scalar) but the temperature and its variation remains finite.
Subsequently, by taking the $\Delta$ derivative of the transport equation \eqref{DT_g} and using the Bianchi identities in \eqref{Bianchi1}--\eqref{Bianchi4}, we arrive at the following second order wave equation 
\begin{equation}\label{transfer}
\Delta D T_{grav} =\frac{\sqrt{2}}{\pi}\Big[-2D\Phi_{22}-6\Delta \Lambda-3\mu(\Psi_{2}+2\Phi_{11})+3\varrho \Phi_{22} \Big],
\end{equation}
where the NP derivatives of the Ricci scalars can be expressed in terms of the 1+1+2 covariant scalars in the following manner
\begin{equation}\label{compo1}
D \Phi_{22}=\dfrac{(\rho + p)(1+c_{s}^2)}{4\sqrt{2}}\left(\frac{\mathcal{A}}{c_{s}^2}-\Theta \right),
\end{equation}
and 
\begin{equation}\label{compo2}
\Delta \Lambda= \dfrac{(\rho + p)(3c_{s}^2-1)}{24\sqrt{2}}\left(\Theta + \frac{\mathcal{A}}{c_{s}^2}\right).
\end{equation}
To obtain the expressions \eqref{compo1} and \eqref{compo2} we have used the evolution and propagation equations for perfect fluid \eqref{Qhat}--\eqref{Qdot}. This clearly, demonstrates how the gravitational temperature wave equation depends on the inhomogeneity of matter energy density (through Weyl) and pressure. Additionally, for any scalar $f$ we can write the following 
\begin{equation}
\Delta D f=\frac{1}{2}\left(\ddot{f}-\dot{\hat{f}}+\hat{\dot{f}}-\hat{\hat{f}}\right).
\end{equation}
Therefore, using \eqref{commute}, the LHS of \eqref{transfer} becomes
\begin{equation}
\Delta D T_{grav} =\frac{1}{2}\Big[\ddot{T}_{grav}-\hat{\hat{T}}_{grav}-\mathcal{A}\dot{T}_{grav}+\left(\Sigma+\frac{\Theta}{3}\right)\hat{T}_{grav}\Big],
\end{equation}
and on the RHS of \eqref{transfer}, the Ricci-NP scalars are given as \eqref{NPRicci} and the NP-spin coefficients as \eqref{NPSpin}. Therefore, using $\eqref{DT_g}$, we can now obtain the wave equation of the gravitational temperature in the 1+1+2 semi-tetrad formalism, clearly showing its dependence on the geometric and thermodynamic scalars (with $\mathcal{A}=\Sigma + \Theta/3$, i.e., $\Re(\epsilon)=0$). For a general non-dust matter field the wave equation of the gravitational temperature becomes
\begin{widetext}
\begin{equation}\label{wave}
 \ddot{T}_{grav}-\hat{\hat{T}}_{grav}=\frac{1}{\pi}\left(\mathcal{E}+\dfrac{\rho+3p}{6}\right)\left(\Sigma+\frac{\Theta}{3}\right)-\frac{3}{2\pi}\left(\phi-\Sigma+\frac{2\Theta}{3}\right)\mathcal{E}-\frac{(\rho+p)}{2\pi}\left[\left(\Sigma+\frac{\Theta}{3}\right)\left(5+\frac{1}{c_{s}^2}\right)+\Theta \left(c_{s}^2 -1\right)-3\Sigma\right].
 \end{equation}
 Similarly, for a dust matter field (with $\mathcal{A}=\Sigma + \Theta/3=0$), the second order wave equation is obtained as the following
 \begin{equation}
 \ddot{T}_{grav}-\hat{\hat{T}}_{grav}=-\frac{3}{2\pi}\left(\phi + \Theta \right)\mathcal{E}+\frac{\hat{\rho}}{2\pi}.\label{wave_dust}
\end{equation}
\end{widetext}
It is now straightforward to observe that the terms involving the matter field energy density and pressure in \eqref{wave} and \eqref{wave_dust} vanish in vacuum. 

As shown in \cite{MOTS1}, It is interesting to notice that on the marginally outer trapped surface (MOTS), the second term completely vanishes in \eqref{wave}, as the expansion of the outgoing null geodesic vanishes i.e.,
\begin{equation}\label{MOTS}
\mu=0 \Rightarrow \left(\phi-\Sigma+\frac{2\Theta}{3}\right)=0. 
\end{equation}
From the above equations \eqref{wave} and \eqref{MOTS}, we can immediately see that the wave equation (for non-dust matter field) of the gravitational temperature on the dynamical horizon (horizon of continuously accreting BH) becomes
\begin{equation}
\ddot{T}_{grav}-\hat{\hat{T}}_{grav}\overset{\text{MOTS}}{=}\frac{1}{\pi}\left(\mathcal{E}+\dfrac{\rho+3p}{6}\right)\left(\Sigma+\frac{\Theta}{3}\right)-\frac{(\rho+p)}{2\pi}\left[\left(\Sigma+\frac{\Theta}{3}\right)\left(5+\frac{1}{c_{s}^2}\right)+\Theta \left(c_{s}^2 -1\right)-3\Sigma\right],
\end{equation}
where $\overset{\text{MOTS}}{=}$ denotes that the equation holds on the MOTS. A similar equation for dynamical horizon follows from \eqref{wave_dust} and \eqref{MOTS} in presence of dust matter field 
\begin{equation}
\ddot{T}_{grav}-\hat{\hat{T}}_{grav}\overset{\text{MOTS}}{=}\frac{\hat{\rho}}{2\pi}. 
\end{equation}
Therefore, in the dust case, the second order wave equation of $T_{grav}$ has no explicit dependence on the Weyl scalar. Rather it is driven by the density inhomogeneity of the dust field. 

\section{Beyond LRS-II symmetry: The Kerr Spacetime}

In the previous sections we dealt with LRS-II symmetry and proved a theorem that conclusively showed the viability of our proposal. We now consider spacetimes that are beyond this symmetry to check whether $T_{grav}$ remains the same. To do our analysis we consider a general Petrov type D vacuum spacetime where we have 
\begin{equation}
\left\{\mathcal{A},\Omega, \Sigma, \mathcal{E}, \mathcal{H}, \phi, \xi \right\},
\end{equation}
as the irreducible set of covariant scalars and the corresponding set of vectors and tensors is 
\begin{equation}
\left\{\mathcal{A}_a, \Omega_a, \Sigma_a, \alpha_a, a_a, \mathcal{E}_a, \mathcal{H}_a, \Sigma_{ab}, \zeta_{ab}, \mathcal{E}_{ab}, \mathcal{H}_{ab}\right\}.
\end{equation}
Apart from the scalars previously described for LRS-II, we now have a set of other scalars, vectors and tensors (please refer to \cite{CH} for their usual meaning). Let us now consider a congruence of null geodesics that are not affinely parametrised with $k^a$ and $l^a$ as the tangent vectors. We can always split them in terms of the chosen unit timelike vector $u^a$ and the orthogonal unit spacelike vector along a chosen preferred direction $e^a$. By considering the null geodesic equation we obtain the non-affinity factor as 
\begin{equation} \label{kappa_gen}
\mathcal{K}=-\frac{1}{2\sqrt{2}}(u^b-e^b)(u^a+e^a)\nabla_{a}(u_b+e_b),
\end{equation}
where the quantity $\mathcal{K}$ measures the failure of $k^{a}$ to be affinely-parametrized. Now, in order to express $\mathcal{K}$ in terms of our new set of irreducible covariant quantities we need to consider the full covariant derivative of our chosen unit vectors 
\begin{align}\label{del_u_gen}
\nabla_au_b=& -u_a(\mathcal{A}e_b +\mathcal{A}_b)
+e_a e_b\left(\dfrac{1}{3}\Theta+\Sigma\right)
+e_a(\Sigma_b + \epsilon_{bc}\Omega^c)+(\Sigma_a - \epsilon_{ac}\Omega^c)e_b +N_{ab}\left(\dfrac{1}{3}\Theta-\dfrac{1}{2}\Sigma\right)+\Omega \epsilon_{ab}+\Sigma_{ab},\\
\nabla_a e_b=& -\mathcal{A} u_au_b-u_a \alpha_b
 +\left(\Sigma+\dfrac{1}{3}\Theta\right)e_a u_b + \left(\Sigma_a - \epsilon_{ac}\Omega^c \right)u_b +e_a a_b +\dfrac{1}{2}\phi N_{ab}+ \xi \epsilon_{ab}+\zeta_{ab}.
\label{del_e_gen}
\end{align}
By substituting \eqref{del_u_gen} and \eqref{del_e_gen} in \eqref{kappa_gen}, we obtain the non-affinity factor of our outgoing null geodesic as
\begin{equation}\label{kappa_semit}
\mathcal{K}=\frac{1}{\sqrt{2}} \left(\mathcal{A}+\frac{\Theta}{3}+\Sigma\right).
\end{equation}
Surprisingly, we again obtain the expression that corresponds to the CET temperature. In terms of the NP formalism we can write \eqref{kappa_gen} and \eqref{kappa_semit} as
\begin{equation}
\mathcal{K}=-l^b\Delta k_b=-(\gamma+\bar{\gamma}).
\end{equation}
This is the same expression we obtained previously in \eqref{nonaff}. Now, keeping in mind of our study in LRS-II, we can expect that this quantity gives rise to the $T_{grav}$ in Petrov type D spacetime, although a rigorous proof is required, which we leave for future work. However, we will try to motivate the argument by considering a specific example, e.g., the Kerr spacetime. 

In Boyer-Lindquist coordinates the Kerr metric is
\begin{equation}
ds^2= -\left(1-\frac{2Mr}{\mathscr{S}}\right)dt^2-\dfrac{4aMr\sin^2\theta}{\mathscr{S}}dtd\phi+\frac{\mathscr{S}}{\mathscr{D}} dr^2  +\mathscr{S} d\theta^2+\dfrac{\mathscr{A}\sin^2\theta}{\mathscr{S}}d\phi^2,
\end{equation}
where $\mathscr{D}\equiv r^2-2Mr+a^2$, $ \mathscr{S}\equiv r^2+a^2 \cos^2\theta $ and $ \mathscr{A}\equiv (r^2 + a^2)^2 -\mathscr{D}a^2 \sin^2\theta$, representing their usual meanings \cite{CH}. We now adopt the Kinnersley null tetrad with the normalisation conditions $k^al_a=-1$ and $m^a\bar{m}_a=1$ \cite{WK}:
\begin{align}
k^a=&\frac{1}{\mathscr{D}}(r^2+{a}^2, \mathscr{D}, 0, {a} ),\nonumber\\
 l^a=& \frac{1}{2\mathscr{S}}(r^2+ {a}^2, -\mathscr{D}, 0, {a} ),\nonumber\\
 m^a=& \frac{1}{\sqrt{2}(r+\iu {a}\cos\theta)}\left(\iu {a}\sin\theta, 0, 1, \frac{\iu}{\sin\theta}\right).
\end{align}
We know that the Kerr metric belongs to the Petrov type D class, and therefore, according to the Goldberg-Sachs theorem \cite{GS}, the principal null congruences are geodesic and shear-free. Hence, we obtain the corresponding spin coefficients
\begin{align}
&\epsilon=\kappa=\lambda=\nu=\sigma=0,\,\,\tau=-\dfrac{\iu a \sin\theta}{\sqrt{2}\mathscr{S}}, \nonumber\\
&\uppi=\dfrac{\iu a \sin\theta}{\sqrt{2}(r-\iu a \cos\theta)^2},\,\,\beta=-\dfrac{\bar{\varrho}\cot\theta}{2\sqrt{2}},\,\,\alpha=\uppi-\bar{\beta},\nonumber\\
&\varrho=-\frac{1}{r-\iu a \cos\theta},\,\,\mu=\frac{\varrho \mathscr{D}}{2\mathscr{S}},\,\,\gamma=\mu+\frac{r-M}{2\mathscr{S}}.
\end{align}
We can now easily isolate the real part of the spin coefficient $\gamma$ and obtain the local $T_{grav}$ (using \eqref{T_new}) in any point of Kerr spacetime as
\begin{equation}\label{T_kerr}
T_{grav}= \frac{1}{\sqrt{2}\pi\mathscr{S}^2}\left[\mathscr{S}(r-M)-\mathscr{D}r\right].
\end{equation}
We observe that the above expression \eqref{T_kerr} also corresponds to the surface gravity expression on the horizon ($\mathscr{D}=0$, i.e., $\mu=0$) for Kerr black hole, derived by Nielsen \cite{nielsen} 
\begin{align}\label{Kkerr}
\mathcal{K}_{\mathcal{H}}\equiv\vert \gamma+\bar{\gamma} \vert=\dfrac{\sqrt{M^2-{a}^2}}{2M\left(M+\sqrt{M^2-{a}^2}\right)-{a}^2\sin^2\theta}.
\end{align}
In the above expression \eqref{Kkerr}, if we put $a=0$, we get back the expression \eqref{Kschw} for the Schwarzschild BH. A similar analysis can be done for the Killing and zero angular momentum observer (ZAMO) frame using the results obtained in \cite{CH} to obtain the corresponding $T_{grav}$. Therefore, the above analysis seems to indicate that the proposal is also valid in a general Petrov type D spacetime, at least in vacuum. 

\section{Physical interpretation of the non-affinity of the null congruence}

To understand transparently, the physical and measurable emergence of the non-affinity factor, let us suppose that light rays (electromagnetic waves) are travelling along the outgoing null geodesics of the Newman Penrose tetrad. In an absolute general context, the associated field variables can be written as $F=A(x^a)\exp\{i\psi\}$, where the function $A(x^a)$ is the amplitude and the function $\psi$ is the ``{\em Eikonal}". In the geometric optics limit, where we assume of wavelength of the field going to zero, the eikonal is a large quantity \cite{LL}. We can immediately identify the tangent vectors for the outgoing real null NP vector $k^a$ as the wave four-vector $n^a$ of the travelling electromagnetic waves, obeying the following conditions
\begin{eqnarray}
n^an_a=0\,,\\
\frac{Dn^a}{\partial\nu}=\mathcal{K}n^a\,,
\end{eqnarray}
where $\nu$ is the non-affine curve parameter along the geodesic.  From the second equation above it is quite clear that as the wave travels along the geodesic the change in the wave vector is in the direction of wave vector itself. In other words, the wave vector is getting {\em re-scaled} as the light ray travels and we have $n^a\rightarrow B(x^a)n^a$, to satisfy the first equation above. We know that the wave vector can be obtained from the eikonal function as
\be
n_a=\frac{\partial\psi}{\partial x^a}\,,
\ee
and the normalisation condition gives the {\it Eikonal equation}
\be
g^{ab}\frac{\partial\psi}{\partial x^a}\frac{\partial\psi}{\partial x^b}=0\;.
\ee
Thus, when the light travels from a point $P_0$ to a future point $P_1$ on the null geodesic, the change in the eikonal, due to the non-affine curve parameter will be given by
\be
\psi(P_1)-\psi(P_0)=\int_{P_0}^{P_1}B(x^a)n_a dx^a
\ee
Now, as discussed in \cite{LL}, the above system of equation can be mapped to the Hamilton-Jacobi equations of the material particles, by identifying the eikonal as the action function $S$. This then readily maps the three dimensional spatial part of the wave vector to the 3-momenta ($p_\alpha$), while the temporal part of the wave vector is mapped to the Hamiltonian ($\mathbb{H}$) of the system
\be
p_\alpha=n_\alpha=\frac{\partial\psi}{\partial x^a}\,, \mathbb{H}= \frac{\partial\psi}{\partial x^0}\,.
\ee
For electromagnetic waves, we know that both the Hamiltonian and the 3-momenta depends on the angular frequency $\omega$, and hence any rescaling of these quantities would result in change of the angular frequency, as the light travels on the manifold with non-trivial intrinsic curvature. Hence, the presence of gravitational temperature can be directly related to the gravitational red/blue shift of the light waves as it travels on the curved manifold. 

\section{Discussion}

Our result on the geometrical identification of the temperature function related to a free gravitational field in locally rotationally symmetric spacetimes, hinges on the following important observation. We can always locally describe an open set of the spacetime manifold using a Newman Penrose null tetrad. However, the presence of non-trivial intrinsic curvature in the open set, would make it impossible to find a curve parameter, which is an affine parameter for both the real null vectors of the null tetrad. Hence at least one of these real null vectors has to be non-affinely parametrised, and without any loss of generality we can assume that the outgoing null congruence is the one. We identified the measure of this non-affinity with the local gravitational temperature on this open set. \\

Interestingly, this identification satisfies all the known temperature functions and also a causal transport equation emerges from the Einstein field equations. Although the mathematical proofs for this identification was constrained to LRS-II symmetries, we showed that the final result carries over to more general algebraically special spacetimes like Kerr spacetime, which somewhat points towards the robustness of the result. We also discussed the relation between this non-affinity and the red/blue shift of electromagnetic waves travelling in a curved manifold.

\begin{acknowledgments}
SC thanks University of KwaZulu-Natal (UKZN), South Africa for post-doctoral scholarship. GA, SDM, RG are supported by the National Research Foundation (South Africa). The authors are thankful to the late Prof. Naresh Dadhich for his insightful comments during the initial phase of the work.
\end{acknowledgments}


\begin{thebibliography}{99}
\bibitem{penrose1} Penrose R 1977 Proc. 1st Marcel Grossmann Meeting on General Relativity ed R Ruffini (North-
Holland) p 173
\bibitem{penrose2} Penrose R 1989 Difficulties with Inflationary Cosmology \textit{Ann. New York Acad. Sci.} \textbf{571} 249

\bibitem{CET} Clifton T, Ellis G F R and Tavakol R 2013 A gravitational entropy proposal \textit{Class. Quantum Grav.} \textbf{30} 125009

\bibitem{E7} Maartens R and Bassett B A 1998 Gravito-electromagnetism \textit{Class. Quantum Grav.} \textbf{15} 705

\bibitem{sen1} Senovilla J M M 2000 Super-energy tensors \textit{Class. Quantum Grav.} \textbf{17} 2799


\bibitem{Bekenstein} Bekenstein J D 1973 Black Holes and Entropy \textit{Phys. Rev. D} \textbf{7} 2333

\bibitem{Hawking} Hawking S W 1975 Particle creation by black holes \textit{Commun. Math. Phys.} \textbf{43} 199--220

\bibitem{AEGH} Acquaviva G, Ellis G F R, Goswami R and Hamid A I M 2015 Constructing black hole entropy from gravitational collapse \textit{Phys. Rev. D} \textbf{91} 064017

\bibitem{SG} Guha S, Khan S and Goswami R 2025 From a collapsing radiating star to an evaporating black hole: A smooth transition from classical to quantum entropy 
\textit{Class. Quantum Grav.} \textbf{43} 07LT02

\bibitem{bolejko}Bolejko K 2018 Gravitational entropy and the cosmological no-hair conjecture \textit{Phys. Rev. D} \textbf{97} 083515

\bibitem{CET1} Sussman R A and Larena J 2014 Gravitational entropies in LTB dust models \textit{Class. Quantum Grav.} \textbf{31} 075021

\bibitem{CET2} Sussman R A and Larena J 2015 Gravitational entropy of local cosmic voids \textit{Class. Quantum Grav.} \textbf{32} 165012

\bibitem{CET3} Piza\~{n}a F A, Sussman R A and Hidalgo J C 2022 Gravitational entropy in Szekeres class I models \textit{Class. Quantum Grav.} \textbf{39} 185005

\bibitem{CET4} Chakraborty S, Guha S and Goswami R 2021 An investigation on gravitational entropy of cosmological models \textit{Int. J. Mod. Phys. D} \textbf{30} 2150051

\bibitem{CET5} Chakraborty S, Guha S and Goswami R 2022 How appropriate are the gravitational entropy proposals for traversable wormholes? \textit{Gen. Relativ. Gravit.} \textbf{54} 47

\bibitem{CET6}Sussman R A, N\'{a}jera S, Piza\~{n}a F A and Hidalgo J C 2025 Gravitational entropy of fluids with energy flux \textit{arXiv preprint} arXiv:2509.25007 [gr-qc]

\bibitem{arrow} Sussman R A, N\'{a}jera S, Piza\~{n}a F A and Hidalgo J C 2026 Revisiting the gravitational `arrow of time' \textit{Class. Quantum Grav.} \textbf{43} 115019

\bibitem{CET7} Li\~{n}an A A P, P\'{e}rez D, Luna C and Romero G E 2025 Gravitational entropy in black hole transformations \textit{Eur. Phys. J. C} \textbf{85} 451


\bibitem{H8} Ellis G F R 1967 The dynamics of pressure-free matter in general relativity \textit{J. Math. Phys.} \textbf{8} 1171--94;\\
van Elst H and Ellis G F R 1996 \textit{Class. Quantum Grav.} \textbf{13} 1099

\bibitem{H9} Ellis G F R and van Elst H 1999 Cosmological models (Carg\`{e}se Lectures 1998) in \textit{Theoretical and Observational Cosmology} ed M Lachieze-Rey (Dordrecht: Kluwer) p 1

\bibitem{H10} Ellis G F R, Maartens R and MacCallum M A H 2007 \textit{Relativistic Cosmology} (Cambridge: Cambridge University Press)

\bibitem{H11} Clarkson C A and Barrett R K 2003 Covariant perturbations of Schwarzschild black holes \textit{Class. Quantum Grav.} \textbf{20} 3855

\bibitem{H12} Betschart G and Clarkson C A 2004 Scalar and electromagnetic perturbations on LRS class II space-times \textit{Class. Quantum Grav.} \textbf{21} 5587

\bibitem{H13} Clarkson C 2007 A Covariant approach for perturbations of rotationally symmetric spacetimes \textit{Phys. Rev. D} \textbf{76} 104034

\bibitem{H14} Stewart J M and Ellis G F R 1968 On solutions of Einstein's equations for a fluid which exhibit local rotational symmetry \textit{J. Math. Phys.} \textbf{9} 1072

\bibitem{R18} Goswami R and Ellis G F R 2021 Tidal forces are gravitational waves \textit{Class. Quantum Grav.} \textbf{38} 085023



\bibitem{E8} Misner C W and Sharp D H 1964 Relativistic equations for adiabatic, spherically symmetric gravitational collapse \textit{Phys. Rev.} \textbf{136} B571

\bibitem{E9} Stephani H, Herlt E, MacCallum M, Hoenselaers C and Kramer D 2003 \textit{Exact Solutions of Einstein's Field Equations} (Cambridge: Cambridge University Press)

\bibitem{chandra} Chandrasekhar S 1983 \textit{The Mathematical Theory of Black Holes} (Oxford: Oxford University Press)

\bibitem{fronov} Frolov V P and Novikov I D 1998 \textit{Black Hole Physics: Basic Concepts and New Developments} (Kluwer Academic, Dordrecht)

\bibitem{null} Goswami R and Ellis G F R 2017 4-dimensional spacetimes from 2-dimensional conformal null data \textit{Class. Quantum Grav.} \textbf{34} 055009

\bibitem{R17} Goswami R and Ellis G F R 2018 Transferring energy in general relativity \textit{Class. Quantum Grav.} \textbf{35} 165007

\bibitem{wr} Wainwright J 1970 A Class of Algebraically Special Perfect Fluid Space-Times \textit{Commun. Math. Phys.} \textbf{17} 42--60

\bibitem{dust} Ellis G F R 1967 Dynamics of pressure-free matter in general relativity \textit{J. Math. Phys.} \textbf{8} 1171
\bibitem{nielsen} Nielsen A 2007 \textit{Black Hole Horizons and Black Hole Thermodynamics} PhD Thesis (University of Canterbury. Physics and Astronomy)

\bibitem{MOTS1} Hamid A I M, Goswami R and Maharaj S D 2014 Cosmic
censorship conjecture revisited: covariantly \textit{Class. Quantum Grav.} \textbf{31} 135010


\bibitem{CH} Hansraj C, Goswami R and Maharaj S D 2023 A semi-tetrad decomposition of the Kerr spacetime \textit{Eur. Phys. J. C} \textbf{83} 321

\bibitem{WK} Kinnersley W 1969 Type D Vacuum Metrics \textit{J. Math. Phys.} \textbf{10} 1195-1203

\bibitem{GS} Goldberg J N and Sachs R K 2009 A theorem on Petrov types \textit{Gen. Relativ. Gravit.} \textbf{41} 433-444

\bibitem{LL} L D Landau and E M Lifshitz (1951), \textit{The Classical Theory of Fields} (Pergamon Press), Chapter 7, Section 53.

\end{thebibliography}
\end{document}